
\documentclass{article}

\usepackage{geometry}

\usepackage{graphicx}
\usepackage{amsmath,amsthm,amssymb,amsfonts}
\usepackage{ucs}
\usepackage{blkarray}
\usepackage[utf8x]{inputenc}
\usepackage[svgnames]{xcolor}
\usepackage[all,2cell,cmtip]{xy}
\usepackage{blkarray}
\usepackage{graphicx}
\usepackage{authblk}
\UseTwocells

\usepackage{tikz}
\usetikzlibrary{arrows,backgrounds,circuits,circuits.ee.IEC,shapes,fit,matrix,decorations.markings, positioning}
\pgfdeclarelayer{edgelayer}
\pgfdeclarelayer{nodelayer}
\pgfsetlayers{edgelayer,nodelayer,main}
\tikzstyle{main node} =[circle,fill=white!20,draw,font=\sffamily\Large\bfseries]
\tikzstyle{terminal}=[circle,fill=white!20,draw,font=\sffamily\Large\bfseries,color=purple,fill=none]
\tikzstyle{shadecircle}=[circle,fill=DodgerBlue,draw=Black]
\tikzstyle{lipid}=[-,draw=Black,line width=0.750]
\tikzstyle{empty}=[inner sep=3pt]


\newcommand{\maps}{\colon}
\newcommand{\define}[1]{{\bf{\boldmath{#1}}}}
\newcommand{\R}{\mathbb{R}}

\newcommand{\LinRel}{ \mathtt{LinRel}}

\newcommand{\DetBalMark}{\mathtt{DetBalMark}}

\theoremstyle{plain}

\newtheorem{thm}{Theorem}

\newtheorem{defn}[thm]{Definition}

\theoremstyle{remark}


\definecolor{myurlcolor}{rgb}{0.6,0,0}
\definecolor{mycitecolor}{rgb}{0,0,0.8}
\definecolor{myrefcolor}{rgb}{0,0,0.8}
\usepackage[pagebackref]{hyperref}
\hypersetup{colorlinks,
linkcolor=myrefcolor,
citecolor=mycitecolor,
urlcolor=myurlcolor}


\begin{document}



  \title{\large \bf Open Markov processes: A compositional perspective on non-equilibrium steady states in biology \\ }  

\author{ \normalsize Blake S. Pollard  \thanks{ Email: bpoll002@ucr.edu} }

\affil{\em Department of Physics and Astronomy \\
\em University of California \\
\em Riverside, CA 92521 \\ }

\date{ \normalsize \today}

\maketitle


\begin{abstract}
\vskip 0.2em \noindent
In recent work, Baez, Fong and the author introduced a framework for describing Markov processes equipped with a detailed balanced equilibrium as open systems of a certain type. These `open Markov processes' serve as the building blocks for more complicated processes. In this paper, we describe the potential application of this framework in the modeling of biological systems as open systems maintained away from equilibrium. We show that non-equilibrium steady states emerge in open systems of this type, even when the rates of the underlying process are such that a detailed balanced equilibrium is permitted. It is shown that these non-equilibrium steady states minimize a quadratic form which we call `dissipation.' In some circumstances, the dissipation is approximately equal to the rate of change of relative entropy plus a correction term. On the other hand, Prigogine's principle of minimum entropy production generally fails for non-equilibrium steady states. We use a simple model of membrane transport to illustrate these concepts.
\end{abstract}


\section{Introduction}

Life exists away from equilibrium. Left isolated, systems will tend toward thermodynamic equilibrium. Open systems can be maintained away from equilibrium via the exchange of energy and matter with the environment. In addition, biological systems typically consist of a large number of interacting parts. This paper presents a way of describing these `parts' as morphisms in a category. A category consists of a collection of \emph{objects} along with \emph{morphisms} or arrows between objects, obeying certain conditions. We consider time-homogeneous Markov processes as a general framework for modeling various biological and biochemical systems whose dynamical equations are linear. Viewed as morphisms in a category, the `open Markov processes' discussed in this paper provide a framework for describing open systems which can be combined to build larger systems. 

Intuitively, one can think of a Markov process as specifying the dynamics of a probability or `population' distribution that is spread across a finite set of states. A population distribution is a non-normalized probability distribution, see for example \cite{Kingman}. The population of a particular state can be any non-negative real number. The total population in an open Markov process is not constant in time as population can flow in and out through certain boundary states. Part of the utility of Markov processes as models of physical or biological systems stems from the flexibility in choosing the correspondence between the states of the Markov process and the actual system it is to model. For instance, the states of a Markov process could correspond to different internal states of a particular molecule or chemical species. In this case, the transition rates describe the rates at which the molecule transitions among these states. Or, the states of a Markov process could correspond to a molecule's physical location. In this case, the transition rates encode the rates at which that molecule moves from place to place. 

This paper is structured as follows. In Section \ref{sec:OMP} we give some preliminary definitions from the theory of Markov processes and explain the concept of an open Markov process. In Section \ref{sec:Membrain} we introduce a model of membrane transport as a simple example of an open Markov process. In Section \ref{sec:DetBalMark}, we introduce the category $\DetBalMark$. The objects in $\DetBalMark$ are finite sets of `states' whose elements are labeled by non-negative real numbers which we call `populations'. The morphisms in $\DetBalMark$ are Markov processes equipped with a detailed balanced equilibrium distribution as well as maps specifying input and output states. If the outputs of one process match the inputs of another process the two can be composed, yielding a new open Markov process. We refer to the union of the input and output states as the boundary of an open Markov process. 

In Section \ref{sec:dissipation}, we show that if the populations at the boundary of an open detailed balanced Markov process are held fixed, then the non-equilibrium steady states which emerge minimize a quadratic form, which we call the `dissipation,' subject to the constraint on the boundary populations. Depending on the values of the boundary populations these non-equilibrium steady states can exist arbitrarily far from the detailed balanced equilibrium of the underlying Markov process. In recent work \cite{BaezFongP}, Baez, Fong and the author construct a functor $\square \maps \DetBalMark \to \LinRel$ from the category of open detailed balanced Markov process to the category of linear relations. Applied to an open detailed balanced Markov process, this functor yields the subset of allowed steady state boundary population-flow pairs, providing an effective `black-boxing' of open detailed balanced Markov processes. In Section \ref{sec:rentropy} we show that, for fixed boundary populations, this principle of minimum dissipation approximates Prigogine's principle of minimum entropy production in the neighborhood of equilibrium plus a correction term involving only the flow of relative entropy through the boundary of the open Markov process.

\section{Open Markov processes}\label{sec:OMP}

In this section we define open Markov processes, describe the detailed balanced condition for equilibria and define non-equilibrium steady states for Markov processes.

An \define{open Markov process}, or open continuous time, discrete state Markov chain, is a triple $(V,B,H)$ where $V$ is a finite set of \define{states}, $B \subseteq V$ is the subset of \define{boundary states} and $H \maps \R^V \to \R^V$ is an \define{infinitesimal stochastic Hamiltonian}
\[ H_{ij} \geq 0, \ \ i \neq j \]
\[ \sum_i H_{ij} = 0. \] 
For each $i \in V$ the dynamical variable $p_i  \in [0,\infty), \ i \in V,$ is the \define{population} at the $i^{\text{th}}$ state. We call the resulting function $p \maps V \to [0,\infty)$ the \define{population distribution}. Populations evolve in time according to the \define{open master equation}  
\[ \frac{dp_i}{dt} = \sum_j H_{ij}p_j, \ \ i \in V-B \]
\[  p_i(t) = b_i(t), \ \ i \in B. \] 
The off-diagonal entries $H_{ij}, \ i \neq j$ are the rates at which population transitions from the $j^{\text{th}}$ to the $i^{\text{th}}$ state. A \define{steady state} distribution is a population distribution which is constant in time:
\[ \frac{dp_i}{dt} = 0 \ \ \text{for all} \ \ i \in V. \]

A \define{closed Markov process}, or continuous time, discrete state Markov chain, is an open Markov process whose boundary is empty. For a closed Markov process, the open master equation becomes the usual master equation
\[ \frac{dp}{dt} = Hp. \]
In a closed Markov process the total population is conserved:
\[ \sum_i \frac{dp_i}{dt} = \sum_{i,j} H_{ij}p_j = 0, \] 
enabling one to talk about the relative probabilities of being in particular states. A steady-state distribution in a closed Markov process is typically called an \define{equilibrium}. We say an equilibrium $q \in [0,\infty)^V $ of a Markov process is \define{detailed balanced} if
\[ H_{ij}q_j = H_{ji}q_i \ \  \text{for all} \ \ i,j \in V. \] 
An \define{open detailed balanced Markov process} is an open Markov process $(V,B,H)$ together with a detailed balanced equilibrium $q \maps V \to (0,\infty)$ on $V$. Notice that the populations of all states in a detailed balanced equilibrium are non-zero.

For a pair of distinct states $i,j \in V$, the term $H_{ij}p_j$ is the flow of population from $j$ to $i$. The \define{net flow} of population from the $j^{\text{th}}$ state to the $i^{\text{th}}$ is 
\[ J_{ij}(p) = H_{ij}p_j - H_{ji}p_i. \]
Summing the net flows into a particular state we can define the \define{net inflow} $J_i(p) \in \R$ of a particular state to be
\[ J_i (p) = \sum_j J_{ij}(p) = \sum_j H_{ij}p_j - H_{ji}p_i. \]
Since $\sum_j H_{ji}p_i = 0,$ the right side of this equation is the time derivative of the population at the $i^{\text{th}}$ state. Writing the master equation in terms of $J_{ij}(p)$ or $J_i(p)$ we have
\[ \frac{dp_i}{dt} = \sum_j J_{ij}(p) = J_i(p). \]
The net flow between each pair of states vanishes identically in a detailed balanced equilibrium $q$:
\[ J_{ij}(q) = 0. \]

The existence of a detailed balanced equilibrium is equivalent to a condition on the rates of a Markov process due known as \define{Kolmogorov's criterion} \cite{Kelly}, namely that
\[ H_{i_1 i_2}H_{i_2 i_3} \dotsm H_{i_{n-1} i_n} H_{i_n i_1} = H_{i_1 i_n}H_{i_n i_{n-1}} \dotsm H_{i_3 i_2}H_{i_2 i_1} \]
for any finite sequence of states $i_1, i_2, \dotsc ,i_n$ of any length. This condition says that the product of the rates along any cycle is equal to the product of the rates along the same cycle in the reverse direction.

A \define{non-equilibrium steady state} is a steady state in which the net flow between at least one pair of states is non-zero. Thus there could be population flowing between pairs of states, but in such a way that these flows still yield constant populations at all states. In a closed Markov process the existence of non-equilibrium steady states requires that the rates of the Markov process violate Kolmogorov's criterion. We show that open Markov processes with constant boundary populations admit non-equilibrium steady states even when the rates of the process satisfy Kolmogorov's criterion. Throughout this paper we use the term equilibrium to mean detailed balanced equilibrium.

\section{Membrane diffusion as an open Markov process}\label{sec:Membrain}

To illustrate these ideas, we consider a simple model of the diffusion of neutral particles across a membrane as an open detailed balanced Markov process with three states $V=\{A,B,C\}$, input $A$ and output $C$. The states $A$ and $C$ correspond to the each side of the membrane, while $B$ corresponds within the membrane itself, see Figure \ref{fig:membrane}.

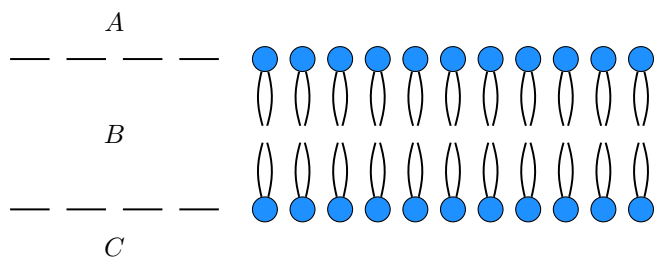
\begin{figure}[h]
\begin{center}
\begin{tikzpicture}
	\begin{pgfonlayer}{nodelayer}
		\node [style=empty] (0) at (-0.5, -0) {};
		\node [style=shadecircle] (1) at (-2, -1) {};
		\node [style=shadecircle] (2) at (2.5, -1) {};
		\node [style=empty] (3) at (0.5, -0) {};
		\node [style=shadecircle] (4) at (-2, 1) {};
		\node [style=shadecircle] (5) at (-0.5, -1) {};
		\node [style=empty] (6) at (-3, 1) {};
		\node [style=empty] (7) at (-3.75, 1) {};
		\node [style=empty] (8) at (1, -0) {};
		\node [style=shadecircle] (10) at (-1.5, -1) {};
		\node [style=shadecircle] (11) at (0.5, -1) {};
		\node [style=shadecircle] (12) at (1.5, 1) {};
		\node [style=empty] (13) at (0, -0) {};
		\node [style=empty] (15) at (-3.75, -1) {};
		\node [style=shadecircle] (16) at (0.5, 1) {};
		\node [style=shadecircle] (17) at (-2.5, 1) {};
		\node [style=empty] (18) at (2, -0) {};
		\node [style=empty] (19) at (-5.25, 1) {};
		\node [style=shadecircle] (20) at (0, 1) {};
		\node [style=empty] (21) at (-4.5, 1) {};
		\node [style=empty] (22) at (-5.25, -1) {};
		\node [style=shadecircle] (23) at (-1.5, 1) {};
		\node [style=shadecircle] (24) at (1.5, -1) {};
		\node [style=shadecircle] (25) at (2, 1) {};
		\node [style=shadecircle] (26) at (1, 1) {};
		\node [style=shadecircle] (27) at (1, -1) {};
		\node [style=empty] (28) at (-3, -1) {};
		\node [style=empty] (29) at (1.5, -0) {};
		\node [style=shadecircle] (30) at (2.5, 1) {};
		\node [style=empty] (31) at (-2.5, -0) {};
		\node [style=empty] (32) at (-4.5, -1) {};
		\node [style=empty] (33) at (-1, -0) {};
		\node [style=shadecircle] (34) at (-1, 1) {};
		\node [style=shadecircle] (35) at (-0.5, 1) {};
		\node [style=empty] (36) at (-6, 1) {};
		\node [style=empty] (37) at (2.5, -0) {};
		\node [style=shadecircle] (38) at (0, -1) {};
		\node [style=empty] (39) at (-1.5, -0) {};
		\node [style=empty] (40) at (-6, -1) {};
		\node [style=shadecircle] (41) at (2, -1) {};
		\node [style=shadecircle] (42) at (-1, -1) {};
		\node [style=shadecircle] (44) at (-2.5, -1) {};
		\node [style=empty] (45) at (-2, -0) {};
		\node [style=empty] (9) at (-4.5, 1.5) {$A$};
		\node [style=empty] (14) at (-4.5, -0) {$B$};
		\node [style=empty] (43) at (-4.5, -1.5) {$C$};
	\end{pgfonlayer}
	\begin{pgfonlayer}{edgelayer}
		\draw [style=lipid, bend right=15, looseness=1.00] (17) to (31);
		\draw [style=lipid, bend left=15, looseness=1.00] (17) to (31);
		\draw [style=lipid, bend left=15, looseness=1.00] (44) to (31);
		\draw [style=lipid, bend right=15, looseness=1.00] (44) to (31);
		\draw [style=lipid, bend right=15, looseness=1.00] (4) to (45);
		\draw [style=lipid, bend left=15, looseness=1.00] (4) to (45);
		\draw [style=lipid, bend left=15, looseness=1.00] (1) to (45);
		\draw [style=lipid, bend right=15, looseness=1.00] (1) to (45);
		\draw [style=lipid, bend right=15, looseness=1.00] (23) to (39);
		\draw [style=lipid, bend left=15, looseness=1.00] (23) to (39);
		\draw [style=lipid, bend left=15, looseness=1.00] (10) to (39);
		\draw [style=lipid, bend right=15, looseness=1.00] (10) to (39);
		\draw [style=lipid, bend right=15, looseness=1.00] (34) to (33);
		\draw [style=lipid, bend left=15, looseness=1.00] (34) to (33);
		\draw [style=lipid, bend left=15, looseness=1.00] (42) to (33);
		\draw [style=lipid, bend right=15, looseness=1.00] (42) to (33);
		\draw [style=lipid, bend right=15, looseness=1.00] (35) to (0);
		\draw [style=lipid, bend left=15, looseness=1.00] (35) to (0);
		\draw [style=lipid, bend left=15, looseness=1.00] (5) to (0);
		\draw [style=lipid, bend right=15, looseness=1.00] (5) to (0);
		\draw [style=lipid, bend right=15, looseness=1.00] (20) to (13);
		\draw [style=lipid, bend left=15, looseness=1.00] (20) to (13);
		\draw [style=lipid, bend left=15, looseness=1.00] (38) to (13);
		\draw [style=lipid, bend right=15, looseness=1.00] (38) to (13);
		\draw [style=lipid, bend right=15, looseness=1.00] (16) to (3);
		\draw [style=lipid, bend left=15, looseness=1.00] (16) to (3);
		\draw [style=lipid, bend left=15, looseness=1.00] (11) to (3);
		\draw [style=lipid, bend right=15, looseness=1.00] (11) to (3);
		\draw [style=lipid, bend right=15, looseness=1.00] (26) to (8);
		\draw [style=lipid, bend left=15, looseness=1.00] (26) to (8);
		\draw [style=lipid, bend left=15, looseness=1.00] (27) to (8);
		\draw [style=lipid, bend right=15, looseness=1.00] (27) to (8);
		\draw [style=lipid, bend right=15, looseness=1.00] (12) to (29);
		\draw [style=lipid, bend left=15, looseness=1.00] (12) to (29);
		\draw [style=lipid, bend left=15, looseness=1.00] (24) to (29);
		\draw [style=lipid, bend right=15, looseness=1.00] (24) to (29);
		\draw [style=lipid, bend right=15, looseness=1.00] (25) to (18);
		\draw [style=lipid, bend left=15, looseness=1.00] (25) to (18);
		\draw [style=lipid, bend left=15, looseness=1.00] (41) to (18);
		\draw [style=lipid, bend right=15, looseness=1.00] (41) to (18);
		\draw [style=lipid, bend right=15, looseness=1.00] (30) to (37);
		\draw [style=lipid, bend left=15, looseness=1.00] (30) to (37);
		\draw [style=lipid, bend left=15, looseness=1.00] (2) to (37);
		\draw [style=lipid, bend right=15, looseness=1.00] (2) to (37);
		\draw [style=lipid] (6) to (7);
		\draw [style=lipid] (7) to (21);
		\draw [style=lipid] (21) to (19);
		\draw [style=lipid] (19) to (36);
		\draw [style=lipid] (28) to (15);
		\draw [style=lipid] (15) to (32);
		\draw [style=lipid] (32) to (22);
		\draw [style=lipid] (22) to (40);
	\end{pgfonlayer}
\end{tikzpicture}
\caption{A simple model for passive diffusion across a membrane.}
\label{fig:membrane}
\end{center}
\end{figure}
In this model, $p_A$ is the number of particles on one side of the membrane, $p_B$ the number of particles within the membrane and $p_C$ the number of particles on the other side of the membrane. The off-diagonal entires in the Hamiltonian $H_{ij}, i \neq j$ are the rates at which population hops from $j$ to $i$. For example $H_{AB}$ is the rate at which population moves from $B$ to $A$, or from inside the membrane to the top of the membrane. Let us assume that the membrane is symmetric in the sense that the rate at which particles hop from outside of the membrane to the interior is the same on either side, i.e. $H_{BA} = H_{BC} = H_{in}$ and $H_{AB} = H_{CB} = H_{out}$. We can draw such an open Markov process as a labeled graph:
\[
\begin{tikzpicture}[->,>=stealth',shorten >=1pt,thick,scale=1.1]
\node[main node, scale=.65](A) at (-2,0) {$q_A$};
\node[main node, scale=.65](B) at (1,0) {$q_B$};
\node[main node, scale=.65](C) at (4,0) {$q_C$};
\node[terminal,scale=.6](X) at (-4,0) {$q_A$};
\node[terminal, scale=.6](Y) at (6,0) {$q_C$};

  \path[every node/.style={font=\sffamily\small}, shorten >=1pt]
    (A) edge [bend left] node[above] {$H_{in}$} (B)
    (B) edge [bend left] node[below] {$H_{out}$} (A) 
    (B) edge [bend left] node[above] {$H_{out}$} (C)
    (C) edge [bend left] node[below] {$H_{in}$} (B);

\path[color=purple, very thick, shorten >=6pt, ->, >=stealth] (X) edge (A);
\path[color=purple, very thick, shorten >=6pt, ->, >=stealth] (Y) edge (C);

\end{tikzpicture}
\]
The labels on the edges are the corresponding transition rates. The states are labeled by their detailed balanced equilibrium populations, which, up to an overall scaling, are given by $q_A = q_C = H_{in} H_{out}$ and $q_B = H_{in}^2$.
Suppose the populations $p_A$ and $p_C$ are externally maintained at constant values, i.e. whenever a particle diffuses from outside the cell into the membrane, the environment around the cell provides another particle and similarly when particles move from inside the membrane to the outside. We call $(p_A,p_C)$ the \define{boundary populations}. Given the values of $p_A$ and $p_C$, the steady state population $p_B$ compatible with these values is
\[ p_B = \frac{H_{in}p_A + H_{in}p_C}{-H_{BB} } = \frac{H_{in}}{H_{out}}\frac{p_A + p_C}{2}. \]
In Section \ref{sec:dissipation} we show that this steady state population minimizes the dissipation, subject to the constraints on $p_A$ and $p_C$.

We thus have a non-equilibrium steady state $p = (p_A, p_B, p_C)$ with $p_B$ given in terms of the boundary populations above. From these values we can compute the boundary flows, $J_A, J_C$ as 
\[ J_A = \sum_j J_{Aj}(p) = H_{out}p_B -H_{in}p_A \]
and 
\[ J_C = \sum_j J_{Cj}(p) = H_{out}p_B - H_{in}p_C. \]
Written in terms of the boundary populations this gives
\[ J_A = \frac{ H_{in}(p_C-p_A)}{2} \]
and
\[ J_C = \frac{H_{in}(p_A-p_C)}{2}. \]
Note that $J_A = - J_C$ implying that there is a constant net flow through the open Markov process. As one would expect, if $p_A > p_C$ there is a positive flow from $A$ to $C$ and vice-versa. Of course, in actual membranes there exist much more complex transport mechanisms than the simple diffusion model presented here. A number of authors have modeled more complicated transport phenomena using the framework of networked master equation systems \cite{OPKBio, SchnakenBook}.

In our framework, we call the collection of all boundary population-flows pairs the steady state `behavior' of the open Markov process. The main theorem of \cite{BaezFongP} constructs a functor from the category of open detailed balanced Markov process to the category of linear relations. Applied to an open detailed balanced Markov process, this functor yields the set of allowed steady state boundary population-flow pairs. One can imagine a situation in which only the populations and flows of boundary states are observable, thus characterizing a process in terms of its behavior. This provides an effective `black-boxing' of open detailed balanced Markov processes. 

As morphisms in a category, open detailed balanced Markov processes can be composed, thereby building up more complex processes from these open building blocks. The fact that `black-boxing'  is accomplished via a functor means that the behavior of a composite Markov process can be built up from the composite behaviors of the open Markov processes from which it is built. In this paper we illustrate how this framework can be utilized to study linear master equation systems far from equilibrium with a particular emphasis on the modeling of biological phenomena. 

Markovian or master equation systems have a long history of being used to model and understand biological systems. We make no attempt to provide a complete review of this line of work. Schnakenberg, in his paper on networked master equation systems, defines the entropy production in a Markov process and shows that a quantity related to entropy serves as a Lyapunov function for master equation systems \cite{SchnakenRev}.  His book \cite{SchnakenBook} provides a number of biochemical applications of networked master equation systems. Oster, Perelson and Katchalsky developed a theory of `networked thermodynamics' \cite{OPK}, which they went on to apply to the study of biological systems \cite{OPKBio}. Following the untimely passing of Katchalsky, Perelson and Oster went on to extend this work into the realm of chemical reactions \cite{OPChem}.

Starting in the 1970's, T.\ L.\ Hill spearheaded a line of research focused on what he called `free energy transduction' in biology. A shortened and updated form of his 1977 text on the subject \cite{Hill} was republished in 2005 \cite{HillDover}. Hill applied various techniques, such as the use of the cycle basis, in the analysis of biological systems. His model of muscle contraction provides one example \cite{HillScience}.

One quantity central to the study of non-equilibrium systems is the rate of entropy production \cite{DeGrootM, Prigogine, Lindblad, GP}. Prigogine's principle of minimum entropy production \cite{PrigogineEnt} asserts that for non-equilibrium steady states that are near equilibrium, entropy production is minimized. This is an approximate principle that is obtained by linearizing the relevant equations about an equilibrium state. In fact, for open Markov processes, non-equilibrium steady states are governed by a \emph{different} minimum principle that holds \emph{exactly}, arbitrarily far from equilibrium. We show that for fixed boundary conditions, non-equilibrium steady states minimize a quantity we call `dissipation'. If the populations of the non-equilibrium steady state are close to the population of the underlying detailed balanced equilibrium, one can show that dissipation is close to the rate of change of relative entropy plus a boundary term. Dissipation is in fact related to the Glansdorff-Prigogine criterion which states that a non-equilibrium steady state is stable if the second order variation of the entropy production is non-negative \cite{GP, SchnakenRev}.  

Starting in the 1990's, the Qians and their collaborators developed a school studying non-equilibrium steady states, publishing a number of articles and books on the topic \cite{Qians}. More recently, results concerning fluctuations have been extended to master equation systems \cite{AndrieuxGaspard}. In the past two decades, Hong Qian of the University of Washington and collaborators have published numerous results on non-equilibrium thermodynamics, biology and related topics \cite{Qian1, Qian2, Qian3}.

This paper is part of a larger project which uses category theory to unify a variety of diagrammatic approaches found across the sciences including, but not limited to, electrical circuits, control theory and bond graphs \cite{BaezFongCirc, BaezEberleControl}. We hope that the categorical approach will shed new light on each of these subjects as well as their interrelation, particularly as we generalize the results presented in this and recent papers to the more general, non-linear, setting of open chemical reaction networks. 

\section{The category of open detailed balanced Markov processes}\label{sec:DetBalMark}

In this section we describe how open detailed balanced Markov processes are the morphisms in a certain type of symmetric, monoidal, dagger-compact category. In previous work, Baez, Fong and the author \cite{BaezFongP} used the framework of decorated cospans \cite{Fong} to construct the category $\DetBalMark$. Here we give an intuitive description of this category and refer to those papers for the mathematical details. 

An object in $\DetBalMark$ is a \define{finite set with populations}, i.e. a finite set $X$ together with a map $p_X \maps X \to [0,\infty)$ assigning a population $p_i \in [0,\infty)$ to each element $i \in X$. A morphism $M \maps (X,p_X) \to (Y,p_Y)$ consists of an open detailed balanced Markov process together with \define{input} and \define{output} maps $i \maps (X,p_X) \to (V,q)$ and $o \maps (Y,p_Y) \to (V,q)$ which \define{preserve population} so that $p_X = iq$ and $p_Y = oq$. The \define{boundary} $B\subseteq V$ of an open Markov process is the union of the images of the input and output maps $B=i(X) \cup o(Y) $.

One can draw an open detailed balanced Markov process as a labeled directed graph whose vertices are labeled by their equilibrium populations and with specified subsets of the vertices as the input and the output states. Recall our simple model of membrane diffusion as an open detailed balanced Markov process, which we now think of as a morphism from the input $X = \{A\}$ to the output $Y=\{C\}$: 
\[
\begin{tikzpicture}[->,>=stealth',shorten >=1pt,thick,scale=1.1]
\node[main node, scale=.65](A) at (-2,0) {$q_A$};
\node[main node, scale=.65](B) at (1,0) {$q_B$};
\node[main node, scale=.65](C) at (4,0) {$q_C$};
\node(input)[color=purple] at (-4.6,0) {$X$};
\node(input)[color=purple] at (6.6,0) {$Y$};
\node[terminal,scale=.6](X) at (-4,0) {$q_A$};
\node[terminal, scale=.6](Y) at (6,0) {$q_C$};

  \path[every node/.style={font=\sffamily\small}, shorten >=1pt]
    (A) edge [bend left] node[above] {$H_{BA}$} (B)
    (B) edge [bend left] node[below] {$H_{AB}$} (A) 
    (B) edge [bend left] node[above] {$H_{CB}$} (C)
    (C) edge [bend left] node[below] {$H_{BC}$} (B);

\path[color=purple, very thick, shorten >=6pt, ->, >=stealth] (X) edge  node[above] {$i$} (A);
\path[color=purple, very thick, shorten >=6pt, ->, >=stealth] (Y) edge  node[above] {$o$} (C);

\end{tikzpicture}
\]
This is a morphism in $\DetBalMark$ from $X$ to $Y$ where $X$ and $Y$ are finite sets with populations. In this simple example, $X$ and $Y$ both contain a single element, namely $A$ and $C$ respectively. Suppose we had another such membrane as depicted in Figure \ref{fig:membrain2}.
\begin{figure}[h!]
\begin{center}
\begin{tikzpicture}
	\begin{pgfonlayer}{nodelayer}
		\node [style=empty] (0) at (-0.5, -0) {};
		\node [style=shadecircle] (1) at (-2, -1) {};
		\node [style=shadecircle] (2) at (2.5, -1) {};
		\node [style=empty] (3) at (0.5, -0) {};
		\node [style=shadecircle] (4) at (-2, 1) {};
		\node [style=shadecircle] (5) at (-0.5, -1) {};
		\node [style=empty] (6) at (-3, 1) {};
		\node [style=empty] (7) at (-3.75, 1) {};
		\node [style=empty] (8) at (1, -0) {};
		\node [style=shadecircle] (10) at (-1.5, -1) {};
		\node [style=shadecircle] (11) at (0.5, -1) {};
		\node [style=shadecircle] (12) at (1.5, 1) {};
		\node [style=empty] (13) at (0, -0) {};
		\node [style=empty] (15) at (-3.75, -1) {};
		\node [style=shadecircle] (16) at (0.5, 1) {};
		\node [style=shadecircle] (17) at (-2.5, 1) {};
		\node [style=empty] (18) at (2, -0) {};
		\node [style=empty] (19) at (-5.25, 1) {};
		\node [style=shadecircle] (20) at (0, 1) {};
		\node [style=empty] (21) at (-4.5, 1) {};
		\node [style=empty] (22) at (-5.25, -1) {};
		\node [style=shadecircle] (23) at (-1.5, 1) {};
		\node [style=shadecircle] (24) at (1.5, -1) {};
		\node [style=shadecircle] (25) at (2, 1) {};
		\node [style=shadecircle] (26) at (1, 1) {};
		\node [style=shadecircle] (27) at (1, -1) {};
		\node [style=empty] (28) at (-3, -1) {};
		\node [style=empty] (29) at (1.5, -0) {};
		\node [style=shadecircle] (30) at (2.5, 1) {};
		\node [style=empty] (31) at (-2.5, -0) {};
		\node [style=empty] (32) at (-4.5, -1) {};
		\node [style=empty] (33) at (-1, -0) {};
		\node [style=shadecircle] (34) at (-1, 1) {};
		\node [style=shadecircle] (35) at (-0.5, 1) {};
		\node [style=empty] (36) at (-6, 1) {};
		\node [style=empty] (37) at (2.5, -0) {};
		\node [style=shadecircle] (38) at (0, -1) {};
		\node [style=empty] (39) at (-1.5, -0) {};
		\node [style=empty] (40) at (-6, -1) {};
		\node [style=shadecircle] (41) at (2, -1) {};
		\node [style=shadecircle] (42) at (-1, -1) {};
		\node [style=shadecircle] (44) at (-2.5, -1) {};
		\node [style=empty] (45) at (-2, -0) {};
		\node [style=empty] (9) at (-4.5, 1.5) {$C'$};
		\node [style=empty] (14) at (-4.5, -0) {$D$};
		\node [style=empty] (43) at (-4.5, -1.5) {$E$};
	\end{pgfonlayer}
	\begin{pgfonlayer}{edgelayer}
		\draw [style=lipid, bend right=15, looseness=1.00] (17) to (31);
		\draw [style=lipid, bend left=15, looseness=1.00] (17) to (31);
		\draw [style=lipid, bend left=15, looseness=1.00] (44) to (31);
		\draw [style=lipid, bend right=15, looseness=1.00] (44) to (31);
		\draw [style=lipid, bend right=15, looseness=1.00] (4) to (45);
		\draw [style=lipid, bend left=15, looseness=1.00] (4) to (45);
		\draw [style=lipid, bend left=15, looseness=1.00] (1) to (45);
		\draw [style=lipid, bend right=15, looseness=1.00] (1) to (45);
		\draw [style=lipid, bend right=15, looseness=1.00] (23) to (39);
		\draw [style=lipid, bend left=15, looseness=1.00] (23) to (39);
		\draw [style=lipid, bend left=15, looseness=1.00] (10) to (39);
		\draw [style=lipid, bend right=15, looseness=1.00] (10) to (39);
		\draw [style=lipid, bend right=15, looseness=1.00] (34) to (33);
		\draw [style=lipid, bend left=15, looseness=1.00] (34) to (33);
		\draw [style=lipid, bend left=15, looseness=1.00] (42) to (33);
		\draw [style=lipid, bend right=15, looseness=1.00] (42) to (33);
		\draw [style=lipid, bend right=15, looseness=1.00] (35) to (0);
		\draw [style=lipid, bend left=15, looseness=1.00] (35) to (0);
		\draw [style=lipid, bend left=15, looseness=1.00] (5) to (0);
		\draw [style=lipid, bend right=15, looseness=1.00] (5) to (0);
		\draw [style=lipid, bend right=15, looseness=1.00] (20) to (13);
		\draw [style=lipid, bend left=15, looseness=1.00] (20) to (13);
		\draw [style=lipid, bend left=15, looseness=1.00] (38) to (13);
		\draw [style=lipid, bend right=15, looseness=1.00] (38) to (13);
		\draw [style=lipid, bend right=15, looseness=1.00] (16) to (3);
		\draw [style=lipid, bend left=15, looseness=1.00] (16) to (3);
		\draw [style=lipid, bend left=15, looseness=1.00] (11) to (3);
		\draw [style=lipid, bend right=15, looseness=1.00] (11) to (3);
		\draw [style=lipid, bend right=15, looseness=1.00] (26) to (8);
		\draw [style=lipid, bend left=15, looseness=1.00] (26) to (8);
		\draw [style=lipid, bend left=15, looseness=1.00] (27) to (8);
		\draw [style=lipid, bend right=15, looseness=1.00] (27) to (8);
		\draw [style=lipid, bend right=15, looseness=1.00] (12) to (29);
		\draw [style=lipid, bend left=15, looseness=1.00] (12) to (29);
		\draw [style=lipid, bend left=15, looseness=1.00] (24) to (29);
		\draw [style=lipid, bend right=15, looseness=1.00] (24) to (29);
		\draw [style=lipid, bend right=15, looseness=1.00] (25) to (18);
		\draw [style=lipid, bend left=15, looseness=1.00] (25) to (18);
		\draw [style=lipid, bend left=15, looseness=1.00] (41) to (18);
		\draw [style=lipid, bend right=15, looseness=1.00] (41) to (18);
		\draw [style=lipid, bend right=15, looseness=1.00] (30) to (37);
		\draw [style=lipid, bend left=15, looseness=1.00] (30) to (37);
		\draw [style=lipid, bend left=15, looseness=1.00] (2) to (37);
		\draw [style=lipid, bend right=15, looseness=1.00] (2) to (37);
		\draw [style=lipid] (6) to (7);
		\draw [style=lipid] (7) to (21);
		\draw [style=lipid] (21) to (19);
		\draw [style=lipid] (19) to (36);
		\draw [style=lipid] (28) to (15);
		\draw [style=lipid] (15) to (32);
		\draw [style=lipid] (32) to (22);
		\draw [style=lipid] (22) to (40);
			\end{pgfonlayer}
\end{tikzpicture}
\caption{Another layer of membrane whose interior population is labeled by $D$ and whose exterior populations are labeled by $C'$ and $E$.}
\label{fig:membrain2}
\end{center}
\end{figure}
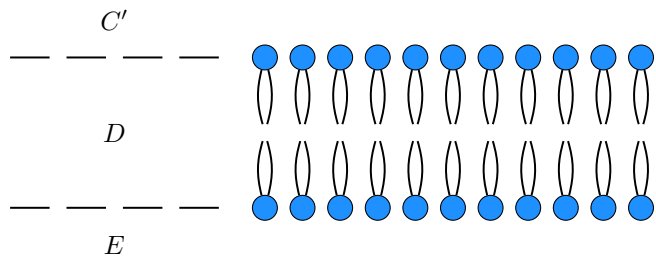
This is a morphism in $\DetBalMark$ from with input $Y=\{C'\}$ and output $Z=\{E\}$. Two open detailed balanced Markov processes can be composed if the detailed balanced equilibrium populations at the outputs of one match the detailed balanced equilibrium populations at the inputs of the other. This requirement guarantees that the composite of two open detailed balanced Markov process still admits a detailed balanced equilibrium.
\[
\begin{tikzpicture}[->,>=stealth',shorten >=1pt,thick,scale=.75, font=\small]
\node[main node, scale=.5](A) at (-2,0) {$q_A$};
\node[main node, scale=.5](B) at (0,0) {$q_B$};
\node[main node, scale=.5](C) at (2,0) {$q_C$};
\node(input)[color=purple] at (-4.8,0) {$X$};
\node(input)[color=purple] at (4.8,0) {$Y$};
\node[terminal,scale=.5](X) at (-4,0) {$q_A$};
\node[terminal, scale=.5](Y) at (4,0) {$q_C$};

  \path[every node/.style={font=\scriptsize}, shorten >=1pt]
    (A) edge [bend left] node[above] {$H_{BA}$} (B)
    (B) edge [bend left] node[below] {$H_{AB}$} (A) 
    (B) edge [bend left] node[above] {$H_{CB}$} (C)
    (C) edge [bend left] node[below] {$H_{BC}$} (B);

\path[color=purple, very thick, shorten >=6pt, ->, >=stealth] (X) edge (A);
\path[color=purple, very thick, shorten >=6pt, ->, >=stealth] (Y) edge (C);

\node[main node, scale=.5](C') at (9.3,0) {$q_{C'}$};
\node[main node, scale=.5](B') at (11.3,0) {$q_D$};
\node[main node, scale=.5](A') at (13.3,0) {$q_{E}$};
\node(input)[color=purple] at (6.5,0) {$Y$};
\node(input)[color=purple] at (16.1,0) {$Z$};
\node[terminal,scale=.5](X') at (7.3,0) {$q_{C'}$};
\node[terminal, scale=.5](Y') at (15.3,0) {$q_{E}$};

  \path[every node/.style={font=\scriptsize}, shorten >=1pt]
    (C') edge [bend left] node[above] {$H_{DE}$} (B')
    (B') edge [bend left] node[below] {$H_{ED}$} (C') 
    (B') edge [bend left] node[above] {$H_{C'D}$} (A')
    (A') edge [bend left] node[below] {$H_{DC'}$} (B');

\path[color=purple, very thick, shorten >=6pt, ->, >=stealth] (X') edge (C');
\path[color=purple, very thick, shorten >=6pt, ->, >=stealth] (Y') edge (A');

\end{tikzpicture}
\]

If $q_C=q_{C'}$ in our two membrane models we can compose them by identifying $C$ with $C'$ to yield an open detailed balanced Markov process modeling the diffusion of neutral particles across membranes arranged in series: 
\[
\begin{tikzpicture}[->,>=stealth',shorten >=1pt,thick,scale=.8, font=\small]
\node[main node, scale=.5](A) at (-2,0) {$q_A$};
\node[main node, scale=.5](B) at (0,0) {$q_B$};
\node[main node, scale=.5](C) at (2,0) {$q_C$};
\node(input)[color=purple] at (-4.8,0) {$X$};
\node[terminal,scale=.5](X) at (-4,0) {$q_A$};

  \path[every node/.style={font=\scriptsize}, shorten >=1pt]
    (A) edge [bend left] node[above] {$H_{BA}$} (B)
    (B) edge [bend left] node[below] {$H_{AB}$} (A) 
    (B) edge [bend left] node[above] {$H_{CB}$} (C)
    (C) edge [bend left] node[below] {$H_{BC}$} (B);

\path[color=purple, very thick, shorten >=6pt, ->, >=stealth] (X) edge (A);

\node[main node, scale=.5](D) at (4,0) {$q_D$};
\node[main node, scale=.5](E) at (6,0) {$q_{E}$};
\node(input)[color=purple] at (8.8,0) {$Z$};
\node[terminal, scale=.5](Y') at (8,0) {$q_{E}$};

  \path[every node/.style={font=\scriptsize}, shorten >=1pt]
    (C) edge [bend left] node[above] {$H_{DC}$} (D)
    (D) edge [bend left] node[below] {$H_{CD}$} (C) 
    (D) edge [bend left] node[above] {$H_{ED}$} (E)
    (E) edge [bend left] node[below] {$H_{DE}$} (D);

\path[color=purple, very thick, shorten >=6pt, ->, >=stealth] (Y') edge (E);

\end{tikzpicture}
\]

Notice that the states corresponding to $C$ and $C'$ in each process have been identified and become internal states in the composite which is a morphism from $X=\{A\}$ to $Z=\{E\}$. This open Markov process can be thought of as modeling the diffusion across two membranes in series, see Figure \ref{fig:membrain3}.

\begin{figure}[h]
\begin{center}
\begin{tikzpicture}
	\begin{pgfonlayer}{nodelayer}
		\node [style=empty] (0) at (-0.5, -0) {};
		\node [style=shadecircle] (1) at (-2, -1) {};
		\node [style=shadecircle] (2) at (2.5, -1) {};
		\node [style=empty] (3) at (0.5, -0) {};
		\node [style=shadecircle] (4) at (-2, 1) {};
		\node [style=shadecircle] (5) at (-0.5, -1) {};
		\node [style=empty] (6) at (-3, 1) {};
		\node [style=empty] (7) at (-3.75, 1) {};
		\node [style=empty] (8) at (1, -0) {};
		\node [style=shadecircle] (10) at (-1.5, -1) {};
		\node [style=shadecircle] (11) at (0.5, -1) {};
		\node [style=shadecircle] (12) at (1.5, 1) {};
		\node [style=empty] (13) at (0, -0) {};
		\node [style=empty] (15) at (-3.75, -1) {};
		\node [style=shadecircle] (16) at (0.5, 1) {};
		\node [style=shadecircle] (17) at (-2.5, 1) {};
		\node [style=empty] (18) at (2, -0) {};
		\node [style=empty] (19) at (-5.25, 1) {};
		\node [style=shadecircle] (20) at (0, 1) {};
		\node [style=empty] (21) at (-4.5, 1) {};
		\node [style=empty] (22) at (-5.25, -1) {};
		\node [style=shadecircle] (23) at (-1.5, 1) {};
		\node [style=shadecircle] (24) at (1.5, -1) {};
		\node [style=shadecircle] (25) at (2, 1) {};
		\node [style=shadecircle] (26) at (1, 1) {};
		\node [style=shadecircle] (27) at (1, -1) {};
		\node [style=empty] (28) at (-3, -1) {};
		\node [style=empty] (29) at (1.5, -0) {};
		\node [style=shadecircle] (30) at (2.5, 1) {};
		\node [style=empty] (31) at (-2.5, -0) {};
		\node [style=empty] (32) at (-4.5, -1) {};
		\node [style=empty] (33) at (-1, -0) {};
		\node [style=shadecircle] (34) at (-1, 1) {};
		\node [style=shadecircle] (35) at (-0.5, 1) {};
		\node [style=empty] (36) at (-6, 1) {};
		\node [style=empty] (37) at (2.5, -0) {};
		\node [style=shadecircle] (38) at (0, -1) {};
		\node [style=empty] (39) at (-1.5, -0) {};
		\node [style=empty] (40) at (-6, -1) {};
		\node [style=shadecircle] (41) at (2, -1) {};
		\node [style=shadecircle] (42) at (-1, -1) {};
		\node [style=shadecircle] (44) at (-2.5, -1) {};
		\node [style=empty] (45) at (-2, -0) {};
		\node [style=empty] (9) at (-4.5, 1.5) {$A$};
		\node [style=empty] (14) at (-4.5, -0) {$B$};
		\node [style=empty] (43) at (-4.5, -2) {$C$};
	\end{pgfonlayer}
	\begin{pgfonlayer}{edgelayer}
		\draw [style=lipid, bend right=15, looseness=1.00] (17) to (31);
		\draw [style=lipid, bend left=15, looseness=1.00] (17) to (31);
		\draw [style=lipid, bend left=15, looseness=1.00] (44) to (31);
		\draw [style=lipid, bend right=15, looseness=1.00] (44) to (31);
		\draw [style=lipid, bend right=15, looseness=1.00] (4) to (45);
		\draw [style=lipid, bend left=15, looseness=1.00] (4) to (45);
		\draw [style=lipid, bend left=15, looseness=1.00] (1) to (45);
		\draw [style=lipid, bend right=15, looseness=1.00] (1) to (45);
		\draw [style=lipid, bend right=15, looseness=1.00] (23) to (39);
		\draw [style=lipid, bend left=15, looseness=1.00] (23) to (39);
		\draw [style=lipid, bend left=15, looseness=1.00] (10) to (39);
		\draw [style=lipid, bend right=15, looseness=1.00] (10) to (39);
		\draw [style=lipid, bend right=15, looseness=1.00] (34) to (33);
		\draw [style=lipid, bend left=15, looseness=1.00] (34) to (33);
		\draw [style=lipid, bend left=15, looseness=1.00] (42) to (33);
		\draw [style=lipid, bend right=15, looseness=1.00] (42) to (33);
		\draw [style=lipid, bend right=15, looseness=1.00] (35) to (0);
		\draw [style=lipid, bend left=15, looseness=1.00] (35) to (0);
		\draw [style=lipid, bend left=15, looseness=1.00] (5) to (0);
		\draw [style=lipid, bend right=15, looseness=1.00] (5) to (0);
		\draw [style=lipid, bend right=15, looseness=1.00] (20) to (13);
		\draw [style=lipid, bend left=15, looseness=1.00] (20) to (13);
		\draw [style=lipid, bend left=15, looseness=1.00] (38) to (13);
		\draw [style=lipid, bend right=15, looseness=1.00] (38) to (13);
		\draw [style=lipid, bend right=15, looseness=1.00] (16) to (3);
		\draw [style=lipid, bend left=15, looseness=1.00] (16) to (3);
		\draw [style=lipid, bend left=15, looseness=1.00] (11) to (3);
		\draw [style=lipid, bend right=15, looseness=1.00] (11) to (3);
		\draw [style=lipid, bend right=15, looseness=1.00] (26) to (8);
		\draw [style=lipid, bend left=15, looseness=1.00] (26) to (8);
		\draw [style=lipid, bend left=15, looseness=1.00] (27) to (8);
		\draw [style=lipid, bend right=15, looseness=1.00] (27) to (8);
		\draw [style=lipid, bend right=15, looseness=1.00] (12) to (29);
		\draw [style=lipid, bend left=15, looseness=1.00] (12) to (29);
		\draw [style=lipid, bend left=15, looseness=1.00] (24) to (29);
		\draw [style=lipid, bend right=15, looseness=1.00] (24) to (29);
		\draw [style=lipid, bend right=15, looseness=1.00] (25) to (18);
		\draw [style=lipid, bend left=15, looseness=1.00] (25) to (18);
		\draw [style=lipid, bend left=15, looseness=1.00] (41) to (18);
		\draw [style=lipid, bend right=15, looseness=1.00] (41) to (18);
		\draw [style=lipid, bend right=15, looseness=1.00] (30) to (37);
		\draw [style=lipid, bend left=15, looseness=1.00] (30) to (37);
		\draw [style=lipid, bend left=15, looseness=1.00] (2) to (37);
		\draw [style=lipid, bend right=15, looseness=1.00] (2) to (37);
		\draw [style=lipid] (6) to (7);
		\draw [style=lipid] (7) to (21);
		\draw [style=lipid] (21) to (19);
		\draw [style=lipid] (19) to (36);
		\draw [style=lipid] (28) to (15);
		\draw [style=lipid] (15) to (32);
		\draw [style=lipid] (32) to (22);
		\draw [style=lipid] (22) to (40);
			\end{pgfonlayer}
\end{tikzpicture}
\vskip 1em
\begin{tikzpicture}
	\begin{pgfonlayer}{nodelayer}
		\node [style=empty] (0) at (-0.5, -0) {};
		\node [style=shadecircle] (1) at (-2, -1) {};
		\node [style=shadecircle] (2) at (2.5, -1) {};
		\node [style=empty] (3) at (0.5, -0) {};
		\node [style=shadecircle] (4) at (-2, 1) {};
		\node [style=shadecircle] (5) at (-0.5, -1) {};
		\node [style=empty] (6) at (-3, 1) {};
		\node [style=empty] (7) at (-3.75, 1) {};
		\node [style=empty] (8) at (1, -0) {};
		\node [style=shadecircle] (10) at (-1.5, -1) {};
		\node [style=shadecircle] (11) at (0.5, -1) {};
		\node [style=shadecircle] (12) at (1.5, 1) {};
		\node [style=empty] (13) at (0, -0) {};
		\node [style=empty] (15) at (-3.75, -1) {};
		\node [style=shadecircle] (16) at (0.5, 1) {};
		\node [style=shadecircle] (17) at (-2.5, 1) {};
		\node [style=empty] (18) at (2, -0) {};
		\node [style=empty] (19) at (-5.25, 1) {};
		\node [style=shadecircle] (20) at (0, 1) {};
		\node [style=empty] (21) at (-4.5, 1) {};
		\node [style=empty] (22) at (-5.25, -1) {};
		\node [style=shadecircle] (23) at (-1.5, 1) {};
		\node [style=shadecircle] (24) at (1.5, -1) {};
		\node [style=shadecircle] (25) at (2, 1) {};
		\node [style=shadecircle] (26) at (1, 1) {};
		\node [style=shadecircle] (27) at (1, -1) {};
		\node [style=empty] (28) at (-3, -1) {};
		\node [style=empty] (29) at (1.5, -0) {};
		\node [style=shadecircle] (30) at (2.5, 1) {};
		\node [style=empty] (31) at (-2.5, -0) {};
		\node [style=empty] (32) at (-4.5, -1) {};
		\node [style=empty] (33) at (-1, -0) {};
		\node [style=shadecircle] (34) at (-1, 1) {};
		\node [style=shadecircle] (35) at (-0.5, 1) {};
		\node [style=empty] (36) at (-6, 1) {};
		\node [style=empty] (37) at (2.5, -0) {};
		\node [style=shadecircle] (38) at (0, -1) {};
		\node [style=empty] (39) at (-1.5, -0) {};
		\node [style=empty] (40) at (-6, -1) {};
		\node [style=shadecircle] (41) at (2, -1) {};
		\node [style=shadecircle] (42) at (-1, -1) {};
		\node [style=shadecircle] (44) at (-2.5, -1) {};
		\node [style=empty] (45) at (-2, -0) {};

		\node [style=empty] (14) at (-4.5, -0) {$D$};
		\node [style=empty] (43) at (-4.5, -1.5) {$E$};
	\end{pgfonlayer}
	\begin{pgfonlayer}{edgelayer}
		\draw [style=lipid, bend right=15, looseness=1.00] (17) to (31);
		\draw [style=lipid, bend left=15, looseness=1.00] (17) to (31);
		\draw [style=lipid, bend left=15, looseness=1.00] (44) to (31);
		\draw [style=lipid, bend right=15, looseness=1.00] (44) to (31);
		\draw [style=lipid, bend right=15, looseness=1.00] (4) to (45);
		\draw [style=lipid, bend left=15, looseness=1.00] (4) to (45);
		\draw [style=lipid, bend left=15, looseness=1.00] (1) to (45);
		\draw [style=lipid, bend right=15, looseness=1.00] (1) to (45);
		\draw [style=lipid, bend right=15, looseness=1.00] (23) to (39);
		\draw [style=lipid, bend left=15, looseness=1.00] (23) to (39);
		\draw [style=lipid, bend left=15, looseness=1.00] (10) to (39);
		\draw [style=lipid, bend right=15, looseness=1.00] (10) to (39);
		\draw [style=lipid, bend right=15, looseness=1.00] (34) to (33);
		\draw [style=lipid, bend left=15, looseness=1.00] (34) to (33);
		\draw [style=lipid, bend left=15, looseness=1.00] (42) to (33);
		\draw [style=lipid, bend right=15, looseness=1.00] (42) to (33);
		\draw [style=lipid, bend right=15, looseness=1.00] (35) to (0);
		\draw [style=lipid, bend left=15, looseness=1.00] (35) to (0);
		\draw [style=lipid, bend left=15, looseness=1.00] (5) to (0);
		\draw [style=lipid, bend right=15, looseness=1.00] (5) to (0);
		\draw [style=lipid, bend right=15, looseness=1.00] (20) to (13);
		\draw [style=lipid, bend left=15, looseness=1.00] (20) to (13);
		\draw [style=lipid, bend left=15, looseness=1.00] (38) to (13);
		\draw [style=lipid, bend right=15, looseness=1.00] (38) to (13);
		\draw [style=lipid, bend right=15, looseness=1.00] (16) to (3);
		\draw [style=lipid, bend left=15, looseness=1.00] (16) to (3);
		\draw [style=lipid, bend left=15, looseness=1.00] (11) to (3);
		\draw [style=lipid, bend right=15, looseness=1.00] (11) to (3);
		\draw [style=lipid, bend right=15, looseness=1.00] (26) to (8);
		\draw [style=lipid, bend left=15, looseness=1.00] (26) to (8);
		\draw [style=lipid, bend left=15, looseness=1.00] (27) to (8);
		\draw [style=lipid, bend right=15, looseness=1.00] (27) to (8);
		\draw [style=lipid, bend right=15, looseness=1.00] (12) to (29);
		\draw [style=lipid, bend left=15, looseness=1.00] (12) to (29);
		\draw [style=lipid, bend left=15, looseness=1.00] (24) to (29);
		\draw [style=lipid, bend right=15, looseness=1.00] (24) to (29);
		\draw [style=lipid, bend right=15, looseness=1.00] (25) to (18);
		\draw [style=lipid, bend left=15, looseness=1.00] (25) to (18);
		\draw [style=lipid, bend left=15, looseness=1.00] (41) to (18);
		\draw [style=lipid, bend right=15, looseness=1.00] (41) to (18);
		\draw [style=lipid, bend right=15, looseness=1.00] (30) to (37);
		\draw [style=lipid, bend left=15, looseness=1.00] (30) to (37);
		\draw [style=lipid, bend left=15, looseness=1.00] (2) to (37);
		\draw [style=lipid, bend right=15, looseness=1.00] (2) to (37);
		\draw [style=lipid] (6) to (7);
		\draw [style=lipid] (7) to (21);
		\draw [style=lipid] (21) to (19);
		\draw [style=lipid] (19) to (36);
		\draw [style=lipid] (28) to (15);
		\draw [style=lipid] (15) to (32);
		\draw [style=lipid] (32) to (22);
		\draw [style=lipid] (22) to (40);
			\end{pgfonlayer}
\end{tikzpicture}
\caption{A depiction of two membranes arranged in series.} 
\label{fig:membrain3}
\end{center}
\end{figure}
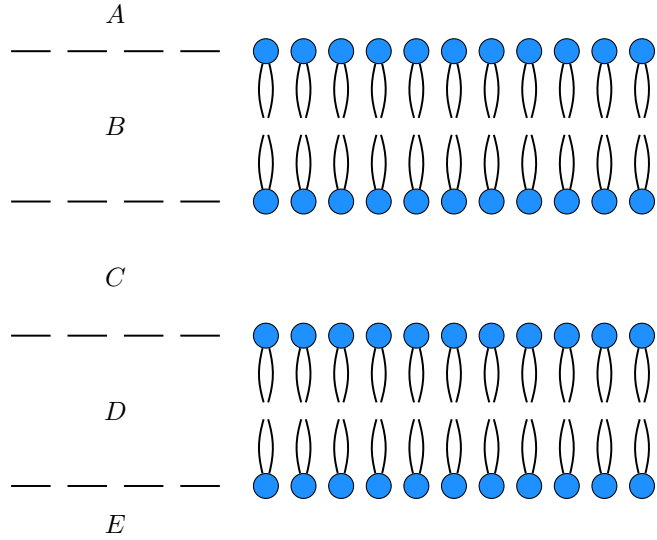

One can `black-box' an open detailed balanced Markov process by converting it into an electrical circuit, applying the already known black-boxing functor for electrical circuits \cite{BaezFongCirc} and translating the result back into the language of open Markov processes \cite{BaezFongP}. The key step in this process is the construction of a quadratic form which we call `dissipation', analogous to power in electrical circuits, which is minimized when the populations of an open Markov process are in a steady state. 

\section{Principle of minimum dissipation}\label{sec:dissipation}

Here we show that by externally fixing the populations at boundary states, one induces steady states which minimize a quadratic form which we call `dissipation.' 
\begin{defn} Given an open detailed balanced Markov process we define the \define{dissipation functional} of a population distribution $p$ to be
\[ D(p) = \frac{1}{2}\sum_{i,j} H_{ij}q_j \left( \frac{p_j}{q_j} - \frac{p_i}{q_i} \right)^2. \]
\end{defn}
Given boundary populations $b \in R^B$, we can minimize this functional over all $p$ which agree on the boundary. Differentiating the dissipation functional with respect to an internal population, we get
\[ \frac{\partial D(p)}{\partial p_n} = -2\sum_j H_{nj}\frac{p_j}{q_n}. \]
Multiplying by $\frac{q_n}{2}$ yields
\[ \frac{q_n}{2} \frac{ \partial D(p)}{\partial p_n} = -\sum_j H_{nj} p_j, \]
where we recognize the right-hand side from the open master equation for internal states. We see that for fixed boundary populations, the conditions for $p$ to be a steady state, namely that
\[ \frac{dp_i}{dt} = 0 \ \  \text{for all} \ \ i \in V,\] is equivalent to the condition that 
\[ \frac{\partial D(p)}{\partial p_n} = 0 \ \  \text{for all} \ \ n \in V-B.\]

\begin{defn}
We say a population distribution obeys the \define{principle of minimum dissipation with boundary population $b$} if $p$ minimizes $D(p)$ subject to the constraint that $p|_b = b$.
\end{defn} 
With this we can state the following theorem:
\begin{thm}
A population distribution $p \in \R^V$ is a steady state with boundary population $b \in \R^B$ if and only if $p$ obeys the principle of minimum dissipation with boundary population $b$.
\end{thm}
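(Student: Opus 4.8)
The plan is to build directly on the computation carried out just before the theorem. There it is shown that, with the boundary populations held fixed at $b$, the steady-state condition $dp_i/dt = 0$ for all $i\in V$ is equivalent to $\partial D(p)/\partial p_n = 0$ for every internal state $n\in V-B$. Rephrased: letting $A_b = \{p\in\R^V : p|_B = b\}$ denote the affine subspace of population distributions with the prescribed boundary values, $p$ is a steady state with boundary population $b$ precisely when $p$ is a critical point of the restriction $D|_{A_b}$, viewed as a function of the free internal coordinates $p_n$, $n\in V-B$. So the entire theorem reduces to the claim that on $A_b$ the critical points of $D$ coincide with its minimizers.

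To establish that, I would first note that $D$ is a non-negative quadratic form on $\R^V$: each summand $H_{ij}q_j\left(\frac{p_j}{q_j}-\frac{p_i}{q_i}\right)^2$ is a non-negative multiple of a perfect square — using $H_{ij}\geq 0$ for $i\neq j$ and $q_j > 0$, both part of the definition of an open detailed balanced Markov process — while the diagonal terms vanish identically. A quadratic form that is everywhere $\geq 0$ has positive semidefinite (constant) Hessian, hence is convex, and its restriction to the affine subspace $A_b$ is again convex and bounded below by $0$; a convex quadratic that is bounded below attains its infimum, so $D|_{A_b}$ in fact has a minimum (this shows the theorem is not vacuous, though it is not strictly needed for the equivalence).

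Next I would invoke the standard fact that for a differentiable convex function $f$ on an affine space, a point is a global minimizer if and only if the gradient of $f$ in the directions along that space vanishes there: the forward implication is the first-order necessary condition for an unconstrained minimum, and the reverse is immediate from the supporting-hyperplane inequality $f(y)\geq f(x)+\langle\nabla f(x),\,y-x\rangle$. Applying this to $f = D|_{A_b}$, whose gradient in the free coordinates has components $\partial D/\partial p_n$ for $n\in V-B$, yields exactly that the critical points and the minimizers of $D$ over $A_b$ are the same set. Chaining the two equivalences — $p$ is a steady state with boundary population $b$ $\iff$ $p$ is a critical point of $D|_{A_b}$ $\iff$ $p$ minimizes $D$ subject to $p|_B = b$ $\iff$ $p$ obeys the principle of minimum dissipation with boundary population $b$ — completes the proof.

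I do not anticipate a genuine obstacle: beyond the pre-theorem derivative computation, the only real input is convexity, which is transparent from the sum-of-squares form of $D$ and the sign conditions baked into the definitions. The single point I would take care to state explicitly is that minimizing $D$ subject to $p|_B = b$ is an \emph{unconstrained} optimization in the internal variables, so that ``$\partial D/\partial p_n = 0$ for all $n\in V-B$'' really is the complete stationarity condition, with no Lagrange multipliers or boundary contributions to worry about.
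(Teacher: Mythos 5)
Your proposal is correct, but it takes a different route from the paper: the paper does not argue the statement internally at all, its proof being a one-line appeal to Theorem 28 of the companion paper by Baez, Fong and the author \cite{BaezFongP}, where the principle of minimum dissipation is established within the decorated-cospan framework (by analogy with extremizing power in electrical circuits). You instead give a short self-contained argument: the computation preceding the theorem identifies steady states with fixed boundary populations $b$ as exactly the critical points of $D$ restricted to the affine slice $\{p : p|_B = b\}$ (an unconstrained problem in the internal coordinates, as you rightly emphasize), and then convexity of $D$ --- immediate from its sum-of-nonnegative-squares form, using $H_{ij}\geq 0$ for $i\neq j$, $q_j>0$, and the vanishing of the diagonal terms --- upgrades ``critical point'' to ``global minimizer'' via the standard first-order characterization of minima of differentiable convex functions. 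Your observation that $D|_{A_b}$ is a convex quadratic bounded below and hence attains its minimum is a nice bonus (non-vacuousness) not needed for the stated equivalence. What each approach buys: the paper's citation keeps the exposition short and situates the result in the general black-boxing machinery, while your argument makes the present paper logically self-contained and exposes that the only real inputs are the derivative identity $\frac{\partial D}{\partial p_n} = -\frac{2}{q_n}\sum_j H_{nj}p_j$ (which itself uses detailed balance and $Hq=0$) and elementary convex analysis. No gaps.
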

\begin{proof}
This follows from Theorem 28 in \cite{BaezFongP}.
\end{proof}

Given specified boundary populations, one can compute the steady state boundary flows by minimizing the dissipation subject to the boundary conditions.
\begin{defn}
We call a population-flow pair a \define{steady state population-flow pair} if the flows arise from a population distribution which obeys the principle of minimum dissipation. 
\end{defn}
\begin{defn}
The \define{behavior} of an open detailed balanced Markov process with boundary $B$ is the set of all steady state population-flow pairs $(p_B, J_B)$ along the boundary.
\end{defn}
Indeed, there is a functor $\square \maps \DetBalMark \to \LinRel$ which maps open detailed balanced Markov processes to their steady state behaviors. This is the main result of our previous paper \cite{BaezFongP}. The fact that this is a functor means that the behavior of a composite open detailed balanced Markov process can be computed as the composite of the behaviors.

\section{Dissipation and Entropy Production} \label{sec:rentropy}

In the last section, we saw that non-equilibrium steady states with fixed boundary populations minimize the dissipation. In this section we relate the dissipation to a divergence between population distributions known in various circles as the relative entropy, relative information or the Kullback-Leibler divergence. The relative entropy is not symmetric and violates the triangle inequality, which is why it is called a `divergence' rather than a metric, or distance function.  We show that for population distributions near a detailed balanced equilibrium, the rate of change of the relative entropy is approximately equal to the dissipation plus a `boundary term'. 

The \define{relative entropy} of two distributions $p,q$ is given by
\[ I(p,q) = \sum_i p_i \ln \left( \frac{p_i}{q_i} \right). \]
It is well known that for a closed Markov process admitting a detailed balanced equilibrium, the relative entropy with respect to this detailed balanced equilibrium distribution is monotonically decreasing with time, see for instance \cite{Kelly}. There is an unfortunate sign convention in the definition of relative entropy: while entropy is typically increasing, relative entropy typically decreases.  More generally, the relative entropy between any two population distributions is non-increasing in a closed Markov process. 

In an open Markov process, the sign of the rate of change of relative entropy is indeterminate. Consider an open Markov process $(V,B,H)$. For any two population distributions $p(t)$ and $q(t)$ which obey the open master equation let us introduce the quantities
\[ \frac{Dp_i}{Dt} = \frac{d p_i}{dt} - \sum_{j \in V} H_{ij} p_j \]
and
\[ \frac{Dq_i}{Dt} = \frac{d q_i}{dt} - \sum_{j \in V} H_{ij} q_j, \]
which measure how much the time derivatives of $p(t)$ and $q(t)$ fail to obey the master equation. Notice that $\frac{Dp_i}{Dt} = 0 $ for $ i \in V-B$, as the populations of internal states evolve according to the master equation. In terms of these quantities, the rate of change of relative entropy for an open Markov process can be written as
\[ \frac{d}{dt} I(p(t),q(t)) = \sum_{i,j \in V} H_{ij}p_j \left( \ln \left(\frac{p_i}{q_i} \right) - \frac{p_i q_j}{q_i p_j} \right) + \sum_{i \in B} \frac{Dp_i}{Dt} \frac{ \partial I}{\partial p_i} + \frac{Dq_i}{Dt} \frac{\partial I}{\partial q_i}. \]
The first term is the rate of change of relative entropy for a closed Markov process. This is less than or equal to zero \cite{BaezP, Pollard}. Thus, the rate of change of relative entropy in an open Markov process satisfies 
\[ \frac{d}{dt} I(p(t), q(t)) \leq \sum_{i \in B} \frac{Dp_i}{Dt} \frac{ \partial I}{\partial p_i} + \frac{Dq_i}{Dt} \frac{\partial I}{\partial q_i}.\]
This inequality tells us that the rate of change of relative entropy in an open Markov processes is bounded by the rate at which relative entropy flows through its boundary. If $q$ is an equilibrium solution of the master equation 
\[ \frac{dq}{dt} = Hq = 0, \] then the rate of change of relative entropy can be written as
\[ \frac{d}{dt} I(p(t),q) = \sum_{i,j \in V} ( H_{ij}p_j - H_{ji}p_i) \ln\left( \frac{p_i q_j}{q_i p_j} \right
) + \sum_{i \in B} \frac{Dp_i}{Dt} \frac{ \partial I}{\partial p_i} \]
Furthermore, if $q$ satisfies detailed balance we can write this as
\[ \frac{d}{dt} I(p(t),q) = -\frac{1}{2} \sum_{i,j \in V} J_{ij} A_{ij}+ \sum_{i \in B} \frac{Dp_i}{Dt} \frac{ \partial I}{\partial p_i}., \]
where  
\[ J_{ij}(p) = H_{ij}p_j - H_{ji}p_i \]
is the \define{thermodynamic flux} from $j$ to $i$ and 
\[ A_{ij}(p) = \ln \left( \frac{H_{ij}p_j}{H_{ji}p_i} \right) \]
is the conjugate \define{thermodynamic force}. This quantity: 
\[ \frac{1}{2} \sum_{i,j \in V} J_{ij}A_{ij} \] is what Schnakenberg calls ``the rate of entropy production'' \cite{SchnakenRev}. This is always non-negative. Note that due to the sign convention in the definition of relative entropy, in the absence of the boundary term, a positive rate of entropy production corresponds to a decreasing relative entropy.

We shall shortly relate the rate of change of relative entropy to the dissipation for open detailed balanced Markov processes, but first let us consider the quantity $A_{ij}(p)$. It is the entropy production per unit flow from $j$ to $i$. If $J_{ij}(p)>0$, i.e. if there is a positive net flow of population from $j$ to $i$, then $A_{ij}(p)>0$. In addition, $J_{ij}(p)=0$ implies that $A_{ij}(p)=0$. Thus we see that this form of entropy production is, by definition, non-negative. 

In the realm of population dynamics, we can understand $A_{ij}(p)$ as the force resulting from a difference in chemical potential. Let us elaborate on this point to clarify the relation of our framework to the language of chemical potentials used in non-equilibrium thermodynamics. Suppose that we are dealing with only a single type of molecule or chemical species. The states could correspond to different locations of the molecule, as in our example of membrane transport. Another possibility is that each state correspond to a different internal configuration of the molecule. In this setting the chemical potential $\mu_i$ is related to the concentration of that chemical species in the following way:
\[ \mu_i = \mu_i^o + T \ln( c_i), \]
where $T$ is the temperature of the system in units where Boltzmann's constant is equal to one and $\mu_i^o$ is the standard chemical potential. The difference in chemical potential between two states gives the force associated with the flow of population which seeks to reduce this difference in chemical potential 
\[ \mu_j - \mu_i = \mu_j^o - \mu_i^o + T \ln\left( \frac{c_j}{c_i} \right). \] 
In general the concentration of the $i^{\text{th}}$ state is proportional to the population of that chemical species divided by the volume of the system $c_i = \frac{p_i}{V}$. In this case, the volumes cancel out in the ratio of concentrations and we have this relation between chemical potential differences and population differences:
\[ \mu_j - \mu_i = \mu_j^o - \mu_i^o + T \ln \left( \frac{p_j}{p_i} \right). \]
This potential difference vanishes when $p_i$ and $p_j$ are in equilibrium and we have
\[ 0 = \mu_j^o - \mu_i^o + T \ln \left( \frac{q_j}{q_i} \right), \]
or that
\[ \frac{q_j}{q_i} = e^{-\frac{\mu_j^o-\mu_i^o}{T}}. \]
If $q$ satisfies detailed balance, then this also gives an expression for the ratio of the transition rates $\frac{H_{ji}}{H_{ij}}$ in terms of the standard chemical potentials. Thus we can translate between differences in chemical potential and ratios of populations via the relation
\[ \mu_j - \mu_i = T \ln \left( \frac{p_j q_i}{q_j p_i} \right), \]
which if $q$ satisfies detailed balance gives
\[ \mu_j - \mu_i = T \ln \left( \frac{H_{ij}p_j}{H_{ji}p_i} \right). \]
We recognize the right hand side as the force $A_{ij}(p)$ times the temperature of the system $T$:
\[ \frac{\mu_j - \mu_i}{T} = A_{ij}(p) .\]

Let us return to our expression for $\frac{d}{dt} I(p(t),q)$ where $q$ is an equilibrium distribution:
\[ \frac{d}{dt} I(p(t),q) = -\frac{1}{2}\sum_{i,j \in V} \left( H_{ij}p_j - H_{ji}p_i \right) \ln \left( \frac{q_i p_j}{q_j p_i} \right) + \sum_{i \in B} \frac{Dp_i}{Dt}\frac{\partial I}{\partial p_i}. \]
Consider the situation in which $p$ is near to the equilibrium distribution $q$ in the sense that
\[ \frac{p_i}{q_i} = 1 + \epsilon_i \]
where $\epsilon_i \in \R$ is the deviation in the ratio $\frac{p_i}{q_i}$ from unity. We collect these deviations in a vector denoted by $\epsilon$. 
Expanding the logarithm to first order in $\epsilon$ we have that
\[ \frac{d}{dt} I(p(t),q) = -\frac{1}{2} \sum_{i,j \in V} \left(H_{ij} p_j - H_{ji}p_i \right) \left( \epsilon_j - \epsilon_i \right) + \sum_{i \in B} \frac{Dp_i}{Dt} \frac{\partial I}{\partial p_i} + O(\epsilon^2), \]
which gives
\[ \frac{d}{dt} I(p(t),q) = -\frac{1}{2} \sum_{i,j \in V} \left(H_{ij}p_j - H_{ji}p_i \right) \left(\frac{p_j}{q_j} - \frac{p_i}{q_i} \right) + \sum_{i \in B}  \frac{Dp_i}{Dt}\frac{\partial I}{\partial p_i} + O(\epsilon^2).\]
By $O(\epsilon^2)$ we mean a sum of terms of order $\epsilon_i^2$. 
When $q$ is a detailed balanced equilibrium we can rewrite this quantity as
\[ \frac{d}{dt} I(p(t),q) = -\frac{1}{2} \sum_{i,j} H_{ij}q_j \left( \frac{p_j}{q_j} - \frac{p_i}{q_i} \right)^2 + \sum_{i \in B} \frac{Dp_i}{Dt} \frac{\partial I}{\partial p_i} + O(\epsilon^2). \]
We recognize the first term as the negative of the dissipation $D(p)$ which yields
\[ \frac{d}{dt} I(p(t),q) = - D(p) + \sum_{i \in B} \frac{Dp_i}{Dt}\frac{\partial I}{\partial p_i} + O(\epsilon^2) . \]

We see that for open Markov processes, minimizing the dissipation approximately minimizes the rate of decrease of relative entropy plus a term which depends on the boundary populations. In the case that boundary populations are held fixed so that $\frac{dp_i}{dt} = 0 , \ i \in B$, we have that 
\[ \frac{Dp_i}{Dt} = -\sum_{j \in V} H_{ij} p_j, \ \ i \in B. \]
In this case, the rate of change of relative entropy can be written as
\[ \frac{d}{dt} I(p(t),q) = \sum_{i \in V-B} \frac{p_i}{q_i} \frac{dp_i}{dt} + 2 \sum_{i \in B} \frac{Dp_i}{Dt} + O(\epsilon^2). \]

Summarizing the results of this section, we have that for $p$ arbitrarily far from the detailed balanced equilibrium equilibrium $q$, the rate of relative entropy reduction can be written as 
\[ \frac{dI(p(t),q)}{dt}  = -\frac{1}{2}\sum_{i,j}  J_{ij}(p) A_{ij}(p)+ \sum_{i \in B} \frac{Dp_i}{Dt} \frac{\partial I}{\partial p_i}. \]
For $p$ in the vicinity of a detailed balanced equilibrium we have that
\[ \frac{dI(p(t),q)}{dt} = -D(p) + \sum_{i \in B} \frac{Dp_i}{Dt}\frac{\partial I}{\partial p_i} + O(\epsilon^2) \]
where $D(p)$ is the dissipation and $\epsilon_i = \frac{p_i}{q_i} - 1$ measures the deviations of the populations $p_i$ from their equilibrium values. We have seen that in a non-equilibrium steady state with fixed boundary populations, dissipation is minimized. We showed that for steady states near equilibirum, the rate of change of relative entropy is approximately equal to minus the dissipation plus a boundary term. Minimum dissipation coincides with minimum entropy production only in the limit $\epsilon \to 0$.

\section{Minimum Dissipation versus Minimum Entropy Production}

We return to our simple three-state example of membrane transport to illustrate the difference between populations which minimize dissipation and those which minimize entropy production:  
\[
\begin{tikzpicture}[->,>=stealth',shorten >=1pt,thick,scale=1.1]
\node[main node, scale=.65](A) at (-2,0) {$q_A$};
\node[main node, scale=.65](B) at (1,0) {$q_B$};
\node[main node, scale=.65](C) at (4,0) {$q_C$};
\node(input)[color=purple] at (-4.6,0) {$X$};
\node(input)[color=purple] at (6.6,0) {$Y$};
\node[terminal,scale=.6](X) at (-4,0) {$q_A$};
\node[terminal, scale=.6](Y) at (6,0) {$q_C$};

  \path[every node/.style={font=\sffamily\small}, shorten >=1pt]
    (A) edge [bend left] node[above] {$1$} (B)
    (B) edge [bend left] node[below] {$1$} (A) 
    (B) edge [bend left] node[above] {$1$} (C)
    (C) edge [bend left] node[below] {$1$} (B);

\path[color=purple, very thick, shorten >=6pt, ->, >=stealth] (X) edge (A);
\path[color=purple, very thick, shorten >=6pt, ->, >=stealth] (Y) edge (C);

\end{tikzpicture}
\]
For simplicity, we have set all transition rates equal to one. In this case, the detailed balance equilibrium distribution is uniform. We take $q_A = q_B = q_C = 1$. If the populations $p_A$ and $p_C$ are externally fixed, then the population $p_B$ which minimizes the dissipation is simply the arithmetic mean of the boundary populations
\[ p_B = \frac{p_A + p_C}{2}. \]

The rate of change of the relative entropy $I(p(t),q)$ where $q$ is the uniform detailed balanced equilibrium is given by
\begin{multline*} \frac{d}{dt} I(p(t),q) = \\
\underbrace{-(p_A - p_B) \ln  (\frac{p_A}{p_B} ) - (p_B - p_C) \ln ( \frac{p_B}{p_C} )}_{-\frac{1}{2}\sum_{i,j \in V} J_{ij}A_{ij}} 
+\underbrace{(p_A - p_B) ( \ln(p_A) + 1) + (p_C- p_B) ( \ln(p_C) +1 )}_{\sum_{i \in B} \frac{Dp_i}{Dt}\frac{\partial I}{\partial p_i}}. \end{multline*}
Differentiating this quantity with respect to $p_B$ for fixed $p_A$ and $p_C$ yields the condition
\[ \frac{p_A+p_C}{2p_B} - \ln(p_B) - 2 = 0. \]
The solution of this equation gives the population $p_B$ which extremizes the rate of change of relative entropy, namely
\[ p_B = \frac{p_A+p_C}{2W \left(\frac{(p_A+p_C)}{2} e^2 \right) }, \]
where $W(x)$ is the Lambert $W$-function or the omega function which satisfies the following relation
\[ x = W(x)e^{W(x)}. \]
The Lambert $W$-function is defined for $x \geq \frac{-1}{e}$ and double valued for $x \in [\frac{-1}{e},0)$. This simple example illustrates the difference between distributions which minimize dissipation subject to boundary constraints and those which extremize the rate of change of relative entropy. For fixed boundary populations, dissipation is minimized in steady states arbitrarily far from equilibrium. For steady states in the neighborhood of the detailed balanced equilibrium, the rate of change of relative entropy is approximately equal to minus the dissipation plus a boundary term.

\section{Discussion}

Treating Markov processes as morphisms in a category leads naturally to open systems which admit non-equilibrium steady states, even when the transition rates of the underlying process satisfy Kolmogorov's criterion. Microscopically, all reactions should be reversible with perhaps a large disparity between the forward and reverse rates. Nonetheless, it is clear that biological organisms are capable, at least locally, of storing free energy. This is typically accomplished via the interaction with other systems or the environment. In this paper, the environment served as a reservoir maintaining boundary populations at constant values. Since open Markov processes are morphisms in the category $\DetBalMark$, one can compose these open systems, thereby building up complicated systems in a systematic way. We saw that the non-equilibrium steady states which emerge minimize a quadratic form which depends on the deviation of the steady state populations from the populations of the underlying detailed balanced equilibrium. For steady states in the neighborhood of equilibrium, we saw that the dissipation is in fact the linear approximation of the rate of change of relative entropy with respect to a detailed balanced equilibrium plus a boundary term. In our framework, dissipation appears to be the fundamental quantity as it is minimized for non-equilibrium steady states arbitrarily far from equilibrium. There has been much work examining the regime of validity of Prigogine's principle of minimum entropy production \cite{Landauer, Landauer2, BMN}. In future work, we aim to generalize our framework for composing Markov processes to the non-linear regime of chemical reaction networks with an eye towards incorporating recent interesting results in the area \cite{PolettiniCRN}. We anticipate that the perspective achieved by viewing interacting systems as morphisms in a category will bring new insight to the study of living systems far from equilibrium.


\section*{Acknowledgments}
The author would like to thank John C. Baez for his help developing the ideas presented in this paper and improving the quality and clarity of their exposition.  The author also thanks Brendan Fong for many useful discussions as well as Daniel Cicala for his comments on the draft of this article. The author is grateful to the organizers of the Workshop on Information and Entropy in Biology held at the National Institute for Mathematical and Biological Synthesis (NIMBIOS) in Knoxville, TN as well as to NIMBIOS for its support in attending the workshop. Part of this project was completed during the author's visit to the Centre for Quantum Technologies (CQT) at the National University of Singapore (NUS) which  was supported by the NSF's East Asia and Pacific Summer Institutes Program (EAPSI) in partnership with the National Research Foundation of Singapore (NRF). 

\end{document}